\theoremstyle{plain}
\newtheorem{theorem}{Theorem}[section]
\newtheorem{lemma}[theorem]{Lemma}
\theoremstyle{remark}
\newtheorem{remark}[theorem]{Remark}
\newtheorem{definition}[theorem]{Definition}
\newcommand{\Z}{\mathbb{Z}}
\newcommand{\R}{\mathbb{R}}
\newcommand{\C}{\mathbb{C}}
\newcommand{\eu}{\mathrm{e}}
\newcommand{\iu}{\mathrm{i}}
\newcommand{\di}{\mathrm{d}}
\newcommand{\bP}{\mathbf{P}}
\newcommand{\bx}{\mathbf{x}}
\newcommand{\by}{\mathbf{y}}
\newcommand{\Id}{\mathbf{1}}
\newcommand{\cI}{\mathcal{I}}
\newcommand{\set}[1]{\left\{ #1 \right\}}
\newcommand{\su}[1]{^{\mathrm{#1}}}
\newcommand{\sub}[1]{_{\mathrm{#1}}}
\newcommand{\up}{\uparrow}
\newcommand{\down}{\downarrow}
\newcommand{\Tr}{\mathrm{Tr}}
\newcommand{\IsDOS}{\mathrm{IsDOS}}
\newcommand{\SCh}{\mathrm{SCh}}
\newcommand{\Ch}{\mathrm{Ch}}
\title{St\v{r}eda formula for charge and spin currents}
\author{Domenico Monaco \and Massimo Moscolari}
\date{May 11, 2020. arXiv version 2} 
\begin{document}

\begin{abstract}
We consider a 2-dimensional Bloch--Landau--Pauli Hamiltonian for a spinful electron in a constant magnetic field subject to a periodic background potential. Assuming that the $z$-component of the spin operator is conserved, we compute the linear response of the associated spin density of states to a small change in the magnetic field, and identify it with the spin Hall conductivity. This response is in the form of a spin Chern marker, which is in general quantized to a half-integer, and to an integer under the further assumption of time-reversal symmetry. Our result is thus a generalization to the context of the quantum spin Hall effect to the well-known formula by St\v{r}eda, which is formulated instead for charge transport.

\bigskip

\noindent \textsc{Keywords.} Bloch--Landau Hamiltonian, spin currents, St\v{r}eda formula, spin Chern marker.

\medskip

\noindent \textsc{Mathematics Subject Classification 2010.} 81Q30, 81Q70.
\end{abstract}

\maketitle

\tableofcontents

\section{Introduction}

The study of spin transport and the development of ``spintronic'' devices has recently attracted a lot of attention in the condensed matter community \cite{Schliemann06, JungwirthWunderlichOlejnik12, Sinova15}, which in turn initiated the mathematical analysis of spin transport \cite{Prodan09, SchulzBaldes, MarcelliTesi, MarcelliPanatiTauber, MarcelliPanatiTeufel}. A major part of the scientific activity in this direction has sparked after the discovery of time-reversal symmetric topological insulators, which can host the analogue of the quantum Hall effect but in the context of spin currents, namely the {\it quantum spin Hall effect} \cite{KM05, BHZ06, HasanKane10, Molenkamp10}.

The interest in the quantum Hall effect lies in the fact that a transport coefficient, namely the transverse charge conductivity $\sigma_{12}$, appears to be quantized (in appropriate physical units) up to an astounding precision: this was explained by relating this conductivity to a topological object, commonly known as the Chern number, which can only take integer values. One possible way to realize this connection was proposed by St\v{r}eda in \cite{Streda82a, Streda82}, by equating $\sigma_{12}$ to the variation with respect to an external magnetic field of the integrated density of states of the system, when the Fermi energy is assumed to stay in a spectral gap of the underlying Hamiltonian. Later, this and related results were proved in several contexts by different groups in the mathematical physics community: we refer to \cite{CorneanMonacoMoscolari18} for a recent account on this research line. For our discussion, it is nonetheless worth stressing that the St\v{r}eda formula was proved by Cornean, Nenciu and Pedersen in \cite{CorneanNenciuPedersen06} in the framework that we will adopt as well, namely that of gapped continuum Schr\"{o}dinger operators in 2-dimensions modelling the dynamics of a spin-$1/2$ particle in a magnetic field subject to a $\Z^2$-periodic background potential.

Our goal is to modify and adapt the proof of the St\v{r}eda formula provided by \cite{CorneanNenciuPedersen06} to accommodate for a {\it conserved} spin degree of freedom. We will thus relate the spin Hall conductivity to the derivative with respect to the external magnetic field of the spin density of states, and obtain an expression for this derivative in terms of what we call a {\it spin Chern marker} (inspired by the topological markers in position space defined in \cite{Prodan09,MarcelliMonacoMoscolariPanati,BiancoResta2011,Caio et al 2019}). Our contribution can thus be seen as a formal proof of ``spin St\v{r}eda formulas'' appearing in the physics literature \cite{YangChang06, Murakami06}, and as a validation for the spin Chern number proposed by Prodan to investigate the quantum spin Hall topological phase \cite{Prodan09}, at least in a setting where the $z$-component of the spin operator is conserved.

The present paper fits in a broader research line focused on establishing the mathematical theory of spin trasport (see \cite{MarcelliTesi, MarcelliPanatiTauber, MarcelliPanatiTeufel} and references therein). For example, some of the techniques from perturbation theory we employ have been used and generalized in the latter references to conduct a careful study of the linear response of a spin current to an external electric field, including the situation -- which is not covered by the present work -- of models which \emph{do not} preserve the spin degree of freedom. To be more specific, it is worth mentioning that the definition of a spin current (and spin Hall conductivity) depends in general on the form of the quantum mechanical spin current operator. A conventional definition for the latter is ${\mathbf {J}}^z= -\frac{1}{2} \big\{ \dot{{\bf X}}, S^z \big\}$, where $-\dot{{\bf X}} = -\iu \left[H, {\bf X}\right]$ is the standard (charge) current operator associated to the Hamiltonian $H$ for a particle with charge $q=-1$, while $S^z$ denotes the $z$-component of the spin operator. On the basis of a ``mesoscopic'' continuity equation, it was proposed by Niu and collaborators in \cite{Niu06} that the ``proper'' definition should instead read $\mathbf{J}^z\sub{proper} = -\iu \big [H, \mathbf{X} S^z \big]$, which leads to well-posed spin currents also in the case of spin non-conserving Hamiltonians, but has the disadvantage of not being a translation-invariant operator. The use of this ``proper'' spin current operator requires setting up a new framework to study (spin) transport, and leads to more involved linear response formulas for the spin Hall conductivity \cite{MarcelliPanatiTeufel}. The distinction between the two current operators has also led to a debate in the present context of the spin St\v{r}eda formula: \cite{YangChang06} adopts the ``conventional'' definition while \cite{Murakami06} argues on the basis of the ``proper'' spin current operator. When $H$ commutes with $S^z$, as we will assume, the two definitions coincide, and no ambiguity arises. We still consider the derivation of the spin St\v{r}eda formula for spin non-preserving Hamiltonians an interesting research line, which we postpone to future work.

Although we formulate the results mainly for spin currents, if one substitutes the spin charge operator $-S^z$ with the electric charge operator $-\Id$, our proofs go through unchanged, and our arguments (heavily inspired to \cite{CorneanNenciuPedersen06}) can be used to cover charge currents and the quantum Hall setting as well. In particular, we obtain a partly new proof (presented with a slightly different strategy in \cite{Moscolari}) of the quantization of the Hall conductivity as a multiple of an integer Chern number. 

\medskip

\noindent \textsc{Acknowledgments.} The authors are grateful to Horia Cornean, S{\o}ren Fournais, and Jacob Schach~M{\o}ller for organizing the QMath14 workshop, and to Gianluca Panati and Marcello Porta for the invitation to participate in the Condensed Matter session. This work benefited from the discussions of the authors with Horia Cornean, Giovanna Marcelli, Gianluca Panati, and Stefan Teufel on topics related to the mathematics of spin transport. The work of D.\ M.\ has been supported by the European Research Council (ERC) under the European Union’s Horizon 2020 research and innovation programme (ERC CoG UniCoSM, grant agreement n.724939). The work of M.\ M.\ is supported by the Grant 8021-00084B of the Danish Council for Independent Research~$|$~Natural Sciences.

\section{Generalities on Bloch--Landau--Pauli Hamiltonians}

Following \cite{CorneanNenciuPedersen06}, our reference Bloch--Landau--Pauli Hamiltonian acts on $L^2(\R^2) \otimes \C^2$ as (in appropriate units)
\begin{equation}
\label{HamiltonianB_1B_2}
H_{B_1,B_2} := \frac{1}{2} \, \bP(B_2)^2 + V(\bx) \otimes \Id + 2 \, B_1 \, \Id \otimes s^z, \quad \bx = (x_1,x_2) \in \R^2,
\end{equation}
where
\[ \bP(B) := \left( - \iu \nabla_{\bx} - \frac{B}{2} \, (-x_2,x_1) \right) \otimes \Id + \frac{1}{2} \left( - \partial_{x_2} V(\bx) , \partial_{x_1} V(\bx) \right) \otimes s^z \]
and $s^z := \sigma^z/2$ is the $z$-component of the spin operator (half of the third Pauli matrix). The magnetic momentum $\bP(B)$ is modified to take into account the intrinsic spin-orbit coupling \cite{LuttingerKohn55}. Physically, the parameters $B_1$ and $B_2$ are modelling the same magnetic field and should be equal, but for the sake of a more general discussion this is not needed for the moment; we will later simplify notation setting $B_1 = B_2 = B$ when appropriate. Observe in particular that
\begin{equation} \label{sz is conserved}
\big[ P_j(B_2), S^z \big] = \big[ H_{B_1,B_2}, S^z \big] = 0, \quad S^z := \Id \otimes s^z, \quad j \in \set{1,2}.
\end{equation}

We assume $V$ to be a smooth, real-valued, $\Z^2$-periodic potential. With this hypothesis, the Hamiltonian $H_{B_1, B_2}$ is a selfadjoint operator bounded from below, and $C^{\infty}_0(\R^2)\otimes\mathbb{C}^2$ is a dense core. Moreover, we assume that $H_{B_1, B_2}$ has a spectral gap around a certain (Fermi) energy $E\sub{F}$. We denote by $\Pi_{B_1, B_2}$ the corresponding Fermi projection, defined by the following Riesz formula in terms of the resolvent of $H_{B_1,B_2}$ \cite{Kato66}:
\begin{equation} \label{Riesz}
\Pi_{B_1, B_2} = \frac{\iu}{2\pi} \, \oint_\Gamma \di w \, \big( H_{B_1,B_2} - w \Id \big)^{-1},
\end{equation}
where $\Gamma$ is a positively-oriented contour in the complex energy plane surrounding all the bands below $E\sub{F}$ and contained in the resolvent set of $H_{B_1,B_2}$. Notice that the terms in the Hamiltonian which are proportional to $S^z$ produce only bounded perturbations, therefore we can apply general results from the theory of magnetic Schr\"odinger operators \cite{Simon,CorneanNenciu09} and conclude that the resolvent operator has an integral kernel that decays away from the diagonal and has a logarithmic singularity on the diagonal \cite{Cornean10, BroderixHundertmarkLeschke}, that is,
$$
\sup_{w \in \Gamma} \left\|\big( H_{B_1,B_2} - w \Id \big)^{-1}({\bf x};{\bf y})\right\| \leq C_{\Gamma} \left(2 + \ln (1+|{\bf x}-{\bf y}|^{-1}) \right)\eu^{-\beta_{\Gamma}|{\bf x}-{\bf y}|},
$$
where $C_{\Gamma}$ and $\beta_{\Gamma}$ are positive constants that depend only on the contour $\Gamma$, and where the norm on the left-hand side is the matrix norm in $\C^2$. This implies that $\Pi_{B_1,B_2}$ has a jointly continuous integral kernel with off-diagonal exponential decay, meaning that
\[ \left\| \Pi_{B_1,B_2}(\bx;\by) \right\| \le C \, \eu^{- \alpha \, |\bx - \by|}, \]
for some positive constants $C, \alpha > 0$. Notice that the exponential decay is a consequence of Combes--Thomas estimates \cite{CT73} combined with the existence of the spectral gap. The same is then true for the operators $[X_j, \Pi_{B_1,B_2}]$, $j \in \set{1,2}$, where $X_j$ denotes the position operator in the $j$-th direction: indeed
\[ [X_j, \Pi_{B_1,B_2}](\bx;\by) = (x_j - y_j) \, \Pi_{B_1,B_2}(\bx;\by) \]
as an identity between $2 \times 2$ matrices. In particular,  the off-diagonal decay of the integral kernel of $[X_j, \Pi_{B_1,B_2}]$ implies via Schur's criterion that it is a bounded operator. Furthermore, all these operators are {\it magnetic covariant}, in the sense that their integral kernels satisfy 
\[ \eu^{\iu B_2\, (x_2 \, n_1 - x_1 n_2)/2} K(\bx-\mathbf{n};\by-\mathbf{n}) \eu^{-\iu B_2\, (y_2 \, n_1 -yx_1 n_2)/2}=K(\bx;\by) \quad \text{ for all } \mathbf{n} \in \Z^2, \]
which is a crucial property in the computation of certain thermodynamic observables (cf.\ Section~\ref{sec:SHC}).

By {\it gauge-covariant magnetic perturbation theory} \cite{CorneanNenciu98,Nenciu02,CorneanNenciu09,Cornean10,CorneanMonacoMoscolari18}, the spectral gap around $E\sub{F}$ persists also for neighbouring values of $B_1$ and $B_2$, and so we can consider $\Pi_{B_1,B_2}$ as a function of $B_1, B_2$. In general, due to the singularity of the perturbation given by terms proportional to $B_2$, this will be very ill-behaved (not even continuous) if we look for example at the norm-topology of bounded operators on $L^2(\R^2) \otimes \C^2$, see for example \cite[Corollary 1.2]{CorneanMonacoMoscolari18}. Nonetheless, there are certain other functions of $\Pi_{B_1,B_2}$, expressed in terms of its integral kernel, which are much more regular. For our purposes, the most relevant one is related to its {\it trace per unit volume}, defined for any magnetic covariant operator $A$ with localized kernel as
\[ \tau(A) := \frac{1}{|\Omega|} \int_{\Omega} \di \bx \, \Tr_{\C^2} \big( A(\bx;\bx) \big) = \frac{1}{|\Omega|} \int_{\Omega} \di \bx \, \sum_{s \in \set{\up,\down}} A(\bx,s; \bx,s) \]
where $\Omega = (0,1)^2$ is the unit cell for the $\Z^2$-periodic potential. As a general remark, we will often use that this trace-like functional is {\it cyclic} on magnetic covariant operators with kernels that are exponentially localized away from the diagonal, namely that if $A, B$ are such operators then $\tau(A \, B) = \tau(B \, A)$ (cf.~\cite[Eqn.~(4.25)]{CorneanNenciuPedersen06}).

\begin{definition}
	The {\it integrated spin density of states} is defined as
	$$ \IsDOS(B_1,B_2) := \tau \left( S^z \, \Pi_{B_1, B_2} \right) = \frac{1}{2} \int_{\Omega} \di \bx \, \big\{ \Pi_{B_1, B_2}(\bx,\up; \bx,\up) - \Pi_{B_1, B_2}(\bx,\down; \bx,\down) \big\}. $$
\end{definition} 

Magnetic perturbation theory allows to show that $\IsDOS(B_1,B_2)$ is a smooth function of $B_1, B_2$: indeed, one can compute for example that (cf.~\cite[Eqn.~(4.23)]{CorneanNenciuPedersen06})
\begin{equation} \label{4.21}
\begin{aligned}
&\left.\frac{\partial}{\partial B_2} \, \IsDOS(B_1,B_2)  \right|_{B_1 = B_2 = B} \\
& = - \frac{1}{4\pi} \tau \left( \int_{\Gamma} \di w \, S^z \, R_w(B) \, P_1(B) \, R_w(B) \, P_2(B) \, R_w(B) - \big( 1 \leftrightarrow 2 \big) \right)
\end{aligned}
\end{equation}
where $\Gamma$ is the usual complex-energy contour appearing in~\eqref{Riesz} and
$ R_w(B) := \big( H_{B_1=B,B_2=B} - w \Id \big)^{-1}$. Note in particular that, to realize that the operators in the trace per unit volume have an exponentially localized integral kernel, one has to exploit the complex integral with respect to $w$ \cite{CorneanNenciu09}.

Notice that only the perturbation due to the magnetic field $B_2$ is singular and has to be treated using gauge covariant magnetic perturbation theory, while the perturbation of the term proportional to $B_1$ is actually bounded and can be handled with the standard techniques of linear perturbation theory \cite{Kato66}. Therefore, for $\lambda:=|\widetilde{B}-B_1|$ small enough, we have that 
$$
\left\|\Pi_{B_1, B_2}-\Pi_{\widetilde{B},B_2}\right\| \leq \lambda C,
$$
for some positive constant $C$. Hence, arguing e.g.\ as in \cite[Lemma C.1]{CorneanMonacoMoscolari18}, one can show that $\IsDOS(B_1,B_2)=\IsDOS(\widetilde{B},B_2)$. We can conclude that the partial derivative of the $\IsDOS(B_1,B_2)$ with respect to $B_2$ evaluated in $B_1=B_2=B$ is actually equal to the total derivative of $\IsDOS(B,B)=:\IsDOS(B)$ with respect to $B$. Not to overburden the notation, we therefore set $B_1 = B_2 = B$ from here on out.

\begin{remark} \label{remark}
Using the fact that $\Pi_{B}:=\Pi_{B,B}$ is a projector, so that in particular $\Pi_{B}^2 = \Pi_{B}$, \eqref{4.21} can be equivalently written as
\begin{align*}
& \frac{\di}{\di B} \, \tau \left( S^z \, \Pi_{B} \right) =  \frac{\di}{\di B} \, \tau \left( S^z \, \Pi_{B}^2 \right) \\
& \qquad = - \frac{1}{4\pi} \, \tau \Bigg( \int_{\Gamma} \di w \, S^z \, \Pi_B \, R_w(B) \, P_1(B) \, R_w(B) \, P_2(B) \, R_w(B) \\
& \qquad\qquad + S^z \, R_w(B) \, P_1(B) \, R_w(B) \, P_2(B) \, R_w(B) \, \Pi_B - \big( 1 \leftrightarrow 2 \big) \Bigg)
\end{align*}
Using cyclicity for the trace per unit volume and $[\Pi_B,S^z]=0$ (in view of~\eqref{sz is conserved}), one can then manipulate the above and get
\begin{equation} \label{derivative}
\begin{aligned}
& \frac{\di}{\di B} \, \IsDOS(B) \\
& \qquad = - \frac{1}{2\pi} \, \tau \left( \int_{\Gamma} \di w \, S^z \, \Pi_B \, R_w(B) \, P_1(B) \, R_w(B) \, P_2(B) \, R_w(B) \, \Pi_B - \big( 1 \leftrightarrow 2 \big) \right).
\end{aligned}
\end{equation}

An analogous expression is used in \cite{Moscolari} for the integrated density of states, obtained by replacing $S^z$ with $\Id$.
\end{remark}

\section{Spin Hall conductivity} \label{sec:SHC}

Next we analyze the response of a $2d$ electron gas, whose one-particle description is modeled by the Hamiltonian $H_B := H_{B,B}$ defined in \eqref{HamiltonianB_1B_2}, to an (alternating) electric current. We will be interested in particular in the response of a transverse spin current and in computing the associated spin conductivity, adapting the strategy of \cite{CorneanNenciuPedersen06}. As noted in the Introduction, when spin is not conserved the situation becomes more involved and this type of argument breaks down, thus requiring a new approach \cite{MarcelliPanatiTeufel}.

To compute the spin conductivity, we treat the electron gas as a grand canonical ensemble of non-interacting particles, initially (at time $t = - \infty$) in thermodynamic equilibrium. The gas is confined to a box $\Lambda_L:=\left(-L,L\right]^2$ of linear size $L \ge 1$, with Dirichlet boundary conditions; we will later be interested in the thermodynamic limit $L \to \infty$. The restriction of the Hamiltonian $H_B$ to the box with the specified boundary conditions is denoted as $H_B^{(L)}$. The asymptotic equilibrium state is modeled by the Fermi--Dirac distribution at temperature $T=\frac{1}{ \beta}$ and at chemical potential $\mu$ in the non-interacting spectral gap:
\begin{equation}
\label{FDistribution}
\rho(-\infty)=f\sub{FD}\big(H_B^{(L)}\big) , \quad
f\sub{FD}(x):= \frac{1}{\eu^{\beta(x-\mu)}+1} \, , \quad x \in \R\, ,\beta >0\, , \mu \in \R \, .
\end{equation}

The state then evolves adiabatically according to a time-dependent perturbed Hamiltonian
$$
H_{B}^{(L)}(t)=H_{B}^{(L)} + V(t) \, ,
$$
where the time-dependent electric potential is given by 
$$
V(\bx,t)= \left(\eu^{\iu \omega t}+\eu^{-\iu \bar{\omega}t}\right) E x_2 \, , \qquad t\leq 0 \, , \; \bx \in \Lambda_L \, , \; \Im\omega < 0\, , \; E \in \R .
$$
Notice that the role of the time-adiabatic parameter is played by the imaginary part of $\omega$.

The state of the system at $t \le 0$ is described by a density matrix $\rho(t)$ that satisfies the Liouville equation
$$
\iu \partial_t \rho(t) = \left[H_{B}^{(L)}(t),\rho(t)\right] \, , \quad \rho(-\infty)=f\sub{FD}\big(H_B^{(L)}\big) \, .
$$
Using the Dyson expansion in the interaction picture one can argue that the solution to the Liouville equation is given by 
$$
\rho(0)=f\sub{FD}\big(H_{B}^{(L)}\big) - \iu \int_{-\infty}^{0} \mathrm{d}s  \left[ V^{I}(s), f\sub{FD}\big(H_{B}^{(L)}\big) \right]  + \mathcal{O}(E^2) \, ,
$$
where $V^{I}(t):=\eu^{\iu t H_{B}^{(L)} } V(t) \eu^{-\iu t H_{B}^{(L)} }$. Assuming the validity of the linear response ansatz above, the spin current density that flows through the system at $t=0$ is given by
\begin{equation}
\label{CurrentDensity}
\begin{aligned}
\mathbf{j}^{z,(L)}&= \frac{1}{|\Lambda_L|} \Tr_{\mathcal{H}_L}\left( f\sub{FD}\big(H_{B}^{(L)}\big)  \mathbf{J}^z\sub{proper}\right) \\
&\phantom{=}- \frac{\iu}{|\Lambda_L|} \Tr_{\mathcal{H}_L} \left(  \int_{-\infty}^{0}  \mathrm{d}s  \left[ V^{I}(s), f\sub{FD}\big(H_{B}^{(L)}\big) \right]  \mathbf{J}^z\sub{proper}\right)  + \mathcal{O}(E^2)  \, .
\end{aligned}
\end{equation}
The conductivity tensor is therefore given by the first order coefficient of the spin current density expansion in $E$, namely
$$
{j}_\alpha^{z,(L)}=\left(\sigma_{\alpha,2}^{z,(L)}(\omega)+\sigma_{\alpha,2}^{z,(L)}(-\overline{\omega}) \right) E, \qquad \alpha \in  \left\{1,2\right\}.
$$
Since we are interested in the transverse conductivity, we consider only the first component of the spin current density, namely $j^{z,(L)}_1$. The equilibrium current, namely the first term in \eqref{CurrentDensity}, vanishes, as $f\sub{FD}\big(H_{B}^{(L)}\big)  J^z\sub{1,proper}= - \iu \big[H_{B}^{(L)}, f\sub{FD}\big(H_{B}^{(L)}\big) \, X_1\, S^z\big]$ and  the trace of a commutator is zero. Then, after integrating by parts and exploiting the cyclic properties of the trace, we get
$$
\begin{aligned}
&\sigma_{12}^{z,(L)}(B,T,\omega)=\frac{\iu}{\omega|\Lambda_L|}\Tr_{\mathcal{H}_L}\left(\left[X_2,f\sub{FD}\big(H_B^{(L)}\big)\right]\left[H_{B}^{(L)},X_1 S^z \right]\right) \\
&\phantom{=}-\frac{\iu}{\omega|\Lambda_L|} \Tr_{\mathcal{H}_L} \left( \int_{-\infty}^{0} \mathrm{d}s \,  \eu^{\iu s \left(H_{B}^{(L)}+\omega\right)} P_2(B) e^{-\iu s H_{B}^{(L)}} \left[    \left[H_{B}^{(L)},X_1S^z\right], f\sub{FD}\big(H_{B}^{(L)}\big) \right]   \right) \, .
\end{aligned} 
$$

Since $S^z$ is conserved, one can again use cyclicity of the trace to argue that the first term vanishes. In the second term, we can factor out the spin operator obtaining the so-called Kubo formula \cite{Kubo57}
$$
\begin{aligned}
&\sigma_{12}^{z,(L)}(B,T,\omega)\\ &=-\frac{1}{\omega|\Lambda_L|} \Tr_{\mathcal{H}_L} \left( S^z \int_{-\infty}^{0} \mathrm{d}s \, \eu^{\iu s \left(H_{B}^{(L)}+\omega\right)} P_2(B) e^{-\iu s H_{B}^{(L)}} \left[  P_1(B)  , f\sub{FD}\big(H_{B}^{(L)}\big) \right]   \right) \, .
\end{aligned}
$$

The previous formula can now be treated in the same way as in the charge current case, see \cite{CorneanNenciuPedersen06, CorneanNenciu09}. Due to the magnetic covariance of the operators in the trace, we first obtain that the spin conductivity admits a thermodynamic limit in the form of a trace per unit volume:
$$
\sigma_{12}^{z}(B,T,\omega):=\lim_{L \to \infty} \sigma_{12}^{z,(L)}(B,T,\omega).
$$
After that, we can perform the limit of zero temperature and zero frequency, where in particular the equilibrium state converges to the Fermi projection, obtaining a closed formula for the transverse spin conductivity which reads
\begin{equation}
\label{SpinConductivity}
\begin{aligned}
&\sigma_{12}^z(B):=\lim_{\omega \to 0}\lim_{T \to 0}\sigma_{12}^{z}(B,T,\omega) \\
&= - \frac{1}{4\pi} \tau \left( \int_{\Gamma} \di w \, S^z \, P_1(B) \, R_w(B) \, P_2(B) \, R^2_w(B) \right. -\left.S^z \, P_1(B) \, R^2_w(B) \, P_2(B) \, R_w(B) \vphantom{\int}\right).
\end{aligned}
\end{equation}

\vspace{-1em}

\section{Spin St\v{r}eda formula}

We are finally able to state our main result.

\begin{theorem}[Spin St\v{r}eda formula] \label{SSF}
For $\Pi_B := \Pi_{B,B}$
\begin{equation} \label{eqn:SSF}
\sigma_{12}^z(B) =  \,  \frac{\di}{\di B} \, \IsDOS(B) = \frac{1}{2\pi} \, \SCh(\Pi_B),
\end{equation}
where the {\it spin Chern marker} of $\Pi_B$ is defined as 
\begin{equation} \label{SCh}
\SCh(\Pi_B) := 2 \pi \iu \, \tau \big( S^z \, \Pi_B \, \big[ [X_1, \Pi_B], [X_2, \Pi_B] \big] \, \Pi_B \big). 
\end{equation}
\end{theorem}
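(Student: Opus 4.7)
My plan is to prove the two equalities in \eqref{eqn:SSF} separately. The first ($\sigma_{12}^z(B) = \frac{\di}{\di B}\IsDOS(B)$) is essentially a cyclicity manipulation on \eqref{SpinConductivity}, while the second ($\frac{\di}{\di B}\IsDOS(B) = \frac{1}{2\pi}\SCh(\Pi_B)$) requires identifying the spin Chern marker with the resolvent contour integral appearing in \eqref{derivative}.

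For the first equality I start from \eqref{SpinConductivity} and use cyclicity of $\tau$ together with $[S^z, R_w(B)] = [S^z, P_j(B)] = 0$ to peel one factor of $R_w$ out of each $R_w^2$. In the first term, moving one $R_w$ from the right all the way to the left yields $\tau(S^z P_1 R_w P_2 R_w^2) = \tau(S^z R_w P_1 R_w P_2 R_w)$. In the second term, a chain of cyclic moves ($R_w$ to the left, then $P_2$ to the left using $[S^z,P_2]=0$, then one further $R_w$ to the left) gives $\tau(S^z P_1 R_w^2 P_2 R_w) = \tau(S^z R_w P_2 R_w P_1 R_w)$. Substituting back into \eqref{SpinConductivity} produces exactly the antisymmetric integrand of \eqref{4.21}, so $\sigma_{12}^z(B) = \frac{\di}{\di B}\IsDOS(B)$.

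For the second equality, a direct canonical-commutator computation using the definition of $P_j(B)$ yields $[X_j, H_B] = \iu P_j(B)$, and the Riesz formula \eqref{Riesz} combined with $[X_j, R_w] = -R_w[X_j, H_B]R_w$ then gives the key identity
$$
[X_j, \Pi_B] = \frac{1}{2\pi}\oint_\Gamma R_w\, P_j(B)\, R_w\,\di w.
$$
Substituting this in $\Pi_B[X_1,\Pi_B][X_2,\Pi_B]\Pi_B$ produces a double contour integral, which I treat by choosing two slightly nested contours $\Gamma_1 \subset \Gamma_2$ around the occupied bands and applying the first resolvent identity $R_{w_1} R_{w_2} = (R_{w_2}-R_{w_1})/(w_2-w_1)$. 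The double integral splits in two pieces: using $[\Pi_B, R_w] = 0$ together with $\Pi_B Q_B = 0$ (where $Q_B := \Id - \Pi_B$), one piece vanishes because the inner $w_2$-integral produces a factor $Q_B R_{w_1}$ sandwiched between $\Pi_B$'s. The remaining piece collapses to a single contour integral via the residue identity $\oint_{\Gamma_1} R_{w_1}/(w_2-w_1)\,\di w_1 = 2\pi\iu\,\Pi_B R_{w_2}$, which holds because $w_2 \in \Gamma_2$ lies outside $\Gamma_1$, so only the residues at eigenvalues below $E\sub{F}$ contribute. The final outcome
$$
\Pi_B[X_1,\Pi_B][X_2,\Pi_B]\Pi_B = \frac{\iu}{2\pi}\oint_\Gamma \Pi_B\, R_w\, P_1(B)\, R_w\, P_2(B)\, R_w\,\Pi_B\,\di w,
$$
when antisymmetrised in $(1 \leftrightarrow 2)$ and multiplied by $2\pi\iu\,\tau(S^z \cdot)$, identifies $\SCh(\Pi_B)$ with $2\pi$ times the expression in \eqref{derivative}.

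The main obstacle is the double-to-single contour integral reduction in the second step: it requires a careful choice of nested contours (to avoid the spurious singularity at $w_1 = w_2$ introduced by the resolvent identity) and a tidy residue bookkeeping, with the algebraic identity $\Pi_B Q_B = 0$ playing the essential role of killing the unwanted piece of the split.
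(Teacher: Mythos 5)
Your proof is correct, and for the second equality you take a genuinely different (and somewhat leaner) route than the paper. The paper proves an auxiliary lemma characterizing $\cI(A) := \frac{\iu}{2\pi}\oint_\Gamma R_w A R_w\,\di w$ as the unique off-diagonal solution of the Liouville equation $[H_B,C]=[A,\Pi_B]$, expands $\big[[\Pi,A_1],[\Pi,A_2]\big]$ for general off-diagonal $A_1,A_2$ into four double-contour terms $\mathrm{I}+\mathrm{II}+\mathrm{III}+\mathrm{IV}$, collapses each with Cauchy's formula on bounded and unbounded domains, and arrives at two contributions: the desired $\oint_\Gamma \Pi R_w P_1 R_w P_2 R_w \Pi - (1\leftrightarrow 2)$ plus an extra antisymmetric term $\oint_\Gamma \Pi R_w\Pi P_1\Pi R_w\Pi P_2\Pi R_w\Pi - (1\leftrightarrow 2)$, which is then killed only \emph{after} multiplying by $S^z$ and using cyclicity of $\tau$. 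You instead start directly from the elementary kernel identity
\[
[X_j,\Pi_B] = \frac{1}{2\pi}\oint_\Gamma R_w\,P_j(B)\,R_w\,\di w,
\]
obtained from $[X_j,H_B]=\iu P_j(B)$ and the Riesz formula, and compute $\Pi_B[X_1,\Pi_B][X_2,\Pi_B]\Pi_B$ itself; the resolvent identity and the nested-contour residue analysis then give the clean \emph{operator} identity
\[
\Pi_B[X_1,\Pi_B][X_2,\Pi_B]\Pi_B = \frac{\iu}{2\pi}\oint_\Gamma \Pi_B R_w P_1(B) R_w P_2(B) R_w \Pi_B\,\di w,
\]
whose antisymmetrization matches \eqref{derivative} on the nose. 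Your route buys two things: it avoids the abstract uniqueness lemma entirely, and it shows the paper's spurious ``diagonal'' contribution actually vanishes already at the operator level (one can also see this directly, since $w\mapsto \Pi R_w\Pi P_1\Pi R_w\Pi P_2\Pi R_w\Pi$ is holomorphic outside the occupied island and decays like $|w|^{-3}$, so pushing $\Gamma$ to infinity kills it), a slightly stronger statement than the trace-level cancellation invoked in the paper. Two small presentational caveats: the residue identity you quote, $\oint_{\Gamma_1}\!R_{w_1}/(w_2-w_1)\,\di w_1=2\pi\iu\,\Pi_B R_{w_2}$, is really the statement that the $\Pi^\perp$-part of $R_{w_1}$ integrates to zero by interior analyticity while the $\Pi$-part is handled by Cauchy's formula on the unbounded component; and the phrase ``$Q_B R_{w_1}$ sandwiched between $\Pi_B$'s'' is slightly imprecise (the inner $w_2$-integral yields $Q_B R_{w_1} Q_B$, which vanishes because it is multiplied by $\Pi_B$ on one side). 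Spelling these two steps out would make the argument airtight. The first equality is handled exactly as in the paper, by cyclicity of $\tau$ together with $[S^z,R_w]=[S^z,P_j]=0$.
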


\begin{proof}
Since $S^z$ commutes with $\mathbf{P}(B)$, the equality of the spin conductivity $\sigma_{12}^z(B)$ with the derivative of the integrated density of states is a simple consequence of the cyclicity for the trace per unit volume in the case of magnetic covariant operators with exponentially localized kernels, compare \eqref{4.21} and \eqref{SpinConductivity}.

Instead, the proof of the second equality claimed in Theorem \ref{SSF}, which relates the integrated spin density of states with the spin Chern marker, relies on the following general result from perturbation theory (see e.g.~\cite{Nenciu02}). To formulate its statement, recall that an operator $A$ is called (\emph{off}-)\emph{diagonal} with respect to a projection $\Pi = \Pi^2$ whenever 
\[ A = A\su{D} := \Pi \, A \, \Pi + \Pi^\perp \, A \, \Pi^\perp \quad \text{ (resp. } A = A\su{OD} := \Pi \, A \, \Pi^\perp + \Pi^\perp \, A \, \Pi \text{ )} \]
with $\Pi^\perp := \Id - \Pi$. It is clear that $A = A\su{D}$ if and only if $[A, \Pi]=0$; therefore, every operator of the form $[A,\Pi] = [A\su{OD},\Pi]$ is off-diagonal.

\begin{lemma} \label{lemma}
Let $A$ be an operator which is relatively bounded with respect to $H_B$. Define
 \[ \cI(A) := \frac{\iu}{2\pi} \oint_{\Gamma} \di w \, (H_B - w \Id)^{-1} \, A \, (H_B - w \Id)^{-1}, \]
 where $\Gamma$ is as in~\eqref{Riesz}. Then $C := \mathcal{I}(A)$ solves
 \begin{equation} \label{eqn:Liouvillian}
 \big[ H_B, C \big] = \big[ A, \Pi_B \big].
 \end{equation}
 Moreover, $\cI(A\su{OD}) = \cI(A)\su{OD}$ is the {\it unique} solution of the above equation among operators which are off-diagonal with respect to the orthogonal decomposition of $L^2(\R^2) \otimes \C^2$ induced by $\Pi_B$.
\end{lemma}
\begin{proof}
Observe first of all that $\cI(A)$ is a well-defined bounded operator, due to the relative boundedness of $A$ and to the boundedness of the resolvent. Consider now the identity of bounded operators \cite[Theorem 6.2.10]{ABG96} $R_w(B) \, [H_B, A] \, R_w(B) = [A,R_w(B)]$. 
Integrate both sides over $\Gamma$ and multiply by $\iu/2\pi$ to obtain
\[ \left[ H_B , \frac{\iu}{2\pi} \oint_{\Gamma} \di w \,  R_w(B) \, A \, R_w(B) \right] = \left[A,\frac{\iu}{2\pi} \oint_{\Gamma} \di w \, R_w(B)\right] = [A,\Pi_B] \]
where the last equality is the Riesz formula~\eqref{Riesz} for the projection $\Pi_B$. 

Since $[A,\Pi_B] = [A\su{OD},\Pi_B]$, $\cI(A\su{OD})$ solves \eqref{eqn:Liouvillian}, and moreover it is off-diagonal as the resolvent is a diagonal operator (it commutes with $H_B$ and therefore with its spectral projection $\Pi_B$), which implies that $\cI(A\su{OD}) = \cI(A)\su{OD}$. If now $C$ is another off-diagonal solution to \eqref{eqn:Liouvillian}, then $\cI(A\su{OD}) - C$ commutes with $H_B$ and is therefore both diagonal and off-diagonal. We conclude that $\cI(A\su{OD}) - C = 0$, which proves the uniqueness of $\cI(A\su{OD})$ among off-diagonal solutions to \eqref{eqn:Liouvillian}.
\end{proof}

Let us use the above Lemma to compute $\big[ [\Pi_{B}, A_1], [\Pi_{B}, A_2] \big]$ assuming that $A_1$ and $A_2$ are bounded off-diagonal operators with respect to $\Pi_{B}$. Dropping the subscript $B$, we have
\begin{align*}
& \big[ [\Pi, A_1], [\Pi, A_2] \big] = \left[ \left[ H, \frac{\iu}{2\pi} \oint_{\Gamma} \di w \, R_w \, A_1 \, R_w \right], \left[ H, \frac{\iu}{2\pi} \oint_{\Gamma'} \di w' \, R_{w'} \, A_2 \, R_{w'} \right] \right] \\
& = \left( \frac{\iu}{2\pi} \right)^2 \oint_{\Gamma} \di w \oint_{\Gamma'} \di w' \, \big\{ R_w \, [H,A_1] \, R_w \, R_{w'} \, [H,A_2] \, R_{w'} - R_{w'} \, [H,A_2] \, R_{w'} \, R_w \, [H,A_1] \, R_w  \big\}.
\end{align*}
We have chosen the contours $\Gamma, \Gamma'$ so that both surround the relevant spectral island of $H$ and $\Gamma'$ has a slightly smaller diameter than $\Gamma$. For future reference, let us denote by $D \subset \C$ the region surrounded by $\Gamma$, and by $D' \subset D$ the region surrounded by $\Gamma'$.

We further elaborate the above expression by using the resolvent identity $R_w \, R_{w'} = (w-w')^{-1} \, \big( R_w - R_{w'} \big)$. 
We conclude that $\big[ [\Pi, A_1], [\Pi, A_2] \big] = \mathrm{I} + \mathrm{II} + \mathrm{III} + \mathrm{IV}$ with
\begin{align*}
\mathrm{I} & := \left( \frac{\iu}{2\pi} \right)^2 \oint_{\Gamma} \di w \oint_{\Gamma'} \di w' \,  \frac{1}{w-w'} \,  R_w \, [H,A_1] \, R_w \, [H,A_2] \, R_{w'} , \\
\mathrm{II} & := -\left( \frac{\iu}{2\pi} \right)^2 \oint_{\Gamma} \di w \oint_{\Gamma'} \di w' \,  \frac{1}{w-w'} \,  R_w \, [H,A_1] \, R_{w'} \, [H,A_2] \, R_{w'} , \\
\mathrm{III} & := - \left( \frac{\iu}{2\pi} \right)^2 \oint_{\Gamma} \di w \oint_{\Gamma'} \di w' \, \frac{1}{w'-w} \, R_{w'} \, [H,A_2] \, R_{w'} \, [H,A_1] \, R_w , \\
\mathrm{IV} & := \left( \frac{\iu}{2\pi} \right)^2 \oint_{\Gamma} \di w \oint_{\Gamma'} \di w' \, \frac{1}{w'-w} \, R_{w'} \, [H,A_2] \, R_w \, [H,A_1] \, R_w .
\end{align*}

Let us show how to compute the term $\mathrm{I}$. We rewrite
\[ \mathrm{I} = \frac{\iu}{2\pi} \oint_{\Gamma} \di w \, R_w \, [H,A_1] \, R_w \, [H,A_2] \left( \frac{\iu}{2\pi} \oint_{\Gamma'} \di w' \,  \frac{R_{w'}}{w-w'} \right). \]
We multiply the integrand in parenthesis by $\Id = \Pi + \Pi^\perp$. Using that $R_w$ is a diagonal operator we can write
\[
\frac{\iu}{2\pi} \oint_{\Gamma'} \di w' \,  \frac{R_{w'}}{w-w'} = \frac{\iu}{2\pi} \oint_{\Gamma'} \di w' \, \frac{\Pi \, R_{w'} \, \Pi}{w-w'} + \frac{\iu}{2\pi} \oint_{\Gamma'} \di w' \, \frac{\Pi^\perp \, R_{w'} \, \Pi^\perp}{w-w'}.
\]

Let us look at the second term on the right-hand side of the above equality. Since $w \not\in D'$, the function $w' \mapsto \Pi^\perp \, R_{w'} \, \Pi^\perp /(w-w')$ is {\it holomorphic} in the region $D'$: indeed, the singularity of the resolvent at real energies $w'$ in the spectral island below the gap is removed by the presence of the projection $\Pi^\perp$ away from this spectral subspace. We conclude that its integral over $\Gamma'$ vanishes by Cauchy's theorem. As for the first integral, instead, we observe that the function $w' \mapsto \Pi \, R_{w'} \, \Pi$ is holomorphic in the {\it exterior} region $\C \setminus \overline{D'}$: this time the projection $\Pi$ removes the singularities of the resolvent above the spectral gap. Using Cauchy's integral formula for unbounded domains \cite[Problem 14.14]{Markushevich65}, we conclude that
\[ \frac{\iu}{2\pi} \oint_{\Gamma'} \di w' \, \frac{\Pi \, R_{w'} \, \Pi}{w-w'} = \frac{1}{2\pi\iu} \oint_{\Gamma'} \di w' \, \frac{\Pi \, R_{w'} \, \Pi}{w'-w} = \Pi \, R_\infty \, \Pi - \Pi \, R_w \, \Pi = - \Pi \, R_w \, \Pi, \]
where $\Pi \, R_\infty \, \Pi := \lim_{|z|\to\infty} \Pi \, R_z \, \Pi = 0$ since $\big\|\Pi \, R_w \, \Pi\big\| = 1/ \mathrm{dist}(z,\sigma(\Pi \, H \, \Pi))$. Finally
\[ \mathrm{I} = - \frac{\iu}{2\pi} \oint_{\Gamma} \di w \, R_w \, [H,A_1] \, R_w \, [H,A_2] \, R_w \, \Pi. \]

The other three terms, namely $\mathrm{II},\mathrm{III}$ and $\mathrm{IV}$, can be computed using the same strategy; moreover, we can freely change the integration contour from $\Gamma'$ to $\Gamma$ in the final expressions due to the analyticity of the resolvent operator in the spectral parameter. We conclude that $-\iu \, \big[ [\Pi, A_1], [\Pi, A_2] \big] = \big( \Pi \, T_{A_1,A_2} \, \Pi - \Pi^\perp \, T_{A_1,A_2} \, \Pi^\perp \big)$ where 
\begin{equation} \label{double commutator}
T_{A_1,A_2} :=\oint_{\Gamma} \di w \, \big\{ R_w \, \big( \iu [H,A_1] \big) \, R_w \, \big( \iu [H,A_2] \big) \, R_w - R_w \, \big( \iu [H,A_2] \big) \, R_w \, \big( \iu [H,A_1] \big) \, R_w \big\}
\end{equation} 
with $R_w = R_w(B)$. In particular $\Pi \, T_{A_1,A_2} \, \Pi = - \iu \, \Pi \, \big[ [\Pi, A_1], [\Pi, A_2] \big] \, \Pi$.

We apply now the above considerations to the operators  $A_j = [X_j, \Pi_{B}]$, $j \in \left\{1,2\right\}$, which are indeed bounded and off-diagonal with respect to $\Pi = \Pi_B$. Notice first of all that, by the Jacobi identity and the fact that $\dot{\mathbf{X}} = \mathbf{P}$, 
\[ \iu \big[ H_B, [X_j, \Pi_B] \big] = - \iu \big[ \Pi_B, [H_B, X_j] \big] = - \big[ \Pi_B, P_j(B) \big] \]
if all terms are sandwiched between $R_w(B)$. Plugging the above into \eqref{double commutator} and performing some simple algebra, we conclude that
\begin{align*}
\iu \, \Pi \, \big[ [X_1, \Pi] , [X_2, \Pi] \big] \, \Pi & = \Pi \, T_{[X_1, \Pi], \, [X_2,\Pi]} \, \Pi \\
& = - \frac{1}{2\pi} \oint_\Gamma \di w \, \Big\{ \Pi \, R_w \,  P_1 \, R_w \, P_2 \, R_w \, \Pi - \big( 1 \leftrightarrow 2 \big) \Big\} \\
& \quad + \frac{1}{2\pi} \oint_\Gamma \di w \, \Big\{ \Pi \, R_w \, \Pi \, P_1 \, \Pi \, R_w \, \Pi \, P_2 \, \Pi \, R_w \, \Pi - \big( 1 \leftrightarrow 2 \big) \Big\} .
\end{align*}
We now multiply the above operator identity by $S^z$ on the left and take the trace per unit volume. Observe that $S^z$ commutes with all the operators appearing on the third line of the above in view of~\eqref{sz is conserved}; the cyclicity property then implies that this line does not contribute to the trace per unit volume, being skew-symmetric in the exchange of the indices $(1 \leftrightarrow 2)$. Therefore by comparing the result equation to~\eqref{derivative} and~\eqref{SCh}, we see that this is exactly the claimed second equality in~\eqref{eqn:SSF}. This concludes the proof of Theorem~\ref{SSF}.
\end{proof}

\section{Quantization of the spin Hall conductivity}
\label{QuantSHall}

The spin St\v{r}eda formula~\eqref{eqn:SSF} can be used to deduce that the spin Hall conductivity $\sigma_{12}^z$ is {\it quantized} to a half-integer or integer in units of $ 1/2\pi$. Indeed, under the assumption~\eqref{sz is conserved} the Fermi projection splits as a direct sum of two projections, one corresponding to spin ``up'' and one corresponding to spin ``down'':
\[ \Pi_{B} = \Pi_{B}^{\up} \oplus \Pi_{B}^{\down}, \quad \Pi_{B}^{\up/\down} := P^{\up/\down} \, \Pi_{B} \, P^{\up/\down}, \quad P^{\up/\down} := \frac{1}{2} \Id \otimes ( \Id \pm \sigma^z ). \]
It is not difficult to see that then
\[ \SCh(\Pi_B) = \frac{1}{2} \Big( \Ch\big(\Pi_B^{\up}\big) - \Ch\big(\Pi_B^{\down}\big) \Big), \text{ where } \Ch(\Pi):= 2 \pi \iu \, \tau \big( \Pi \, \big[ [X_1, \Pi], [X_2, \Pi] \big] \, \Pi \big). \]
$\Ch(\Pi)$ is the {\it Chern marker} (sometimes called also {\it Chern character}) of the projection $\Pi$. It can be argued (see e.g.~\cite{CorneanMonacoMoscolari18}) that $\Ch(\Pi_B)$ is a {\it continuous} function of $B$ (as long as the spectral gap around the Fermi energy does not close), which attains {\it integer values}: therefore $\Ch\big(\Pi_B^{\up/\down}\big)$ must be both {\it constant integers} in $B$ if the spectral gap remains open, and we deduce that $\SCh(\Pi_B) \in \frac{1}{2} \Z$.

Notice that when $B_1=B_2=0$, then $H_{0,0}$ commutes with the fermionic time-reversal operator $\Theta := \iu (\Id \otimes \sigma^y) K$, where $K$ is the complex conjugation operator on $L^2(\R^2) \otimes \C^2$. This implies in particular that $\Ch\big(\Pi_0^{\down}\big) = \Ch\big(\Theta \, \Pi_0^{\up} \, \Theta^{-1}\big) = - \Ch\big(\Pi_0^{\up}\big)$, cf.~\cite{MonacoPanati15}. We therefore conclude that actually
\[ \SCh(\Pi_0) = \Ch\big(\Pi_0^{\up}\big) \in \Z. \]
The parity of this integer is the \emph{spin Chern number} proposed in~\cite{Prodan09} as a $\Z_2$-valued topological invariant to classify (disordered) time-reversal symmetric topological insulators. In the setting of discrete Bogoliubov--de Gennes Hamiltonians and with the assumption of conserved spin, the equality between the spin conductivity and the spin Chern number has been proved in \cite{DeNittisSchulzBaldes}. A study of the spin Chern number in relation to spin transport and its robustness to small perturbations that break spin conservation can be found in \cite{Prodan09, SchulzBaldes}, see also the discussion in \cite{MarcelliPanatiTeufel}.

\bigskip


\bigskip\bigskip

{\footnotesize
\begin{tabular}{rl}
 (D. Monaco) & \textsc{Dipartimento di Matematica, ``La Sapienza'' Universit\`{a} di Roma} \\
 &   Piazzale Aldo Moro 2, 00185 Roma, Italy \\
 &  \textsl{E-mail address}: \href{mailto:monaco@mat.uniroma1.it}{\texttt{monaco@mat.uniroma1.it}} \\
 \\
(M. Moscolari) &  \textsc{Department of Mathematical Sciences, Aalborg University} \\
&   Skjernvej 4A, 9220 Aalborg, Denmark \\
&  \textsl{E-mail address}:\href{mailto:massimomoscolari@math.aau.dk}{\texttt{massimomoscolari@math.aau.dk}}
\end{tabular}
}
\end{document}